\newtheorem{thm}{Theorem}
\newtheorem{lem}[thm]{Lemma}
\begin{document}

\setlength{\abovedisplayskip}{2pt}
\setlength{\belowdisplayskip}{2pt}

\title{Efficient Multi-Resource, Multi-Unit VCG Auction}
\author{Liran Funaro\inst{1} \and
Orna Agmon Ben-Yehuda\inst{1,2} \and
Assaf Schuster\inst{1}}
\authorrunning{L. Funaro et al.}
\institute{Computer Science Dept., Technion---Israel Institute of Technology\\
\email{\{funaro,ladypine,assaf\}@cs.technion.ac.il} \and
Caesarea Rothschild Institute for Interdisciplinary Applications of Computer Science, University of Haifa}

\maketitle 

\begin{abstract}

We consider the optimization problem of a multi-resource, multi-unit VCG auction that produces an optimal, i.e., non-approximated, social welfare.
We present an algorithm that solves this optimization problem with pseudo-polynomial complexity and demonstrate its efficiency via our implementation.
Our implementation is efficient enough to be deployed in real systems to allocate computing resources in fine time-granularity.
Our algorithm has a pseudo-near-linear time complexity on average (over all possible realistic inputs) with respect to the number of clients and the number of possible unit allocations.
In the worst case, it is quadratic with respect to the number of possible allocations.
Our experiments validate our analysis and show near-linear complexity.
This is in contrast to the unbounded, nonpolynomial complexity of known solutions, which do not scale well for a large number of agents.

For a single resource and concave valuations, our algorithm reproduces the results of
a well-known algorithm.
It does so, however, without subjecting the valuations to any restrictions and supports a multiple resource auction, which improves the social welfare over a combination of single-resource auctions by a factor of 2.5-50.
This makes our algorithm applicable to real clients in a real system.

\keywords{VCG \and MCMK \and d-MCK \and MCK \and Resource Allocation \and Cloud}
\end{abstract}

\section{Introduction}

Infrastructure-as-a-Service (IaaS) providers have been using auctions to
control congestion via preemptible virtual-machine (VM) instances for nearly a decade~\cite{agmonben-yehuda2014ginseng,AmazonSpotInstances,alibabaAlibabaSpot,packetPacketSpot}.
A natural extension of this idea is to auction additional individual resources in an existing VM.
VCG auctions~\cite{clarke1971multipart,groves1973incentives,vickrey1961counterspeculation}
are appealing for this purpose, as they are \emph{truthful}: they incentivize clients to reveal their true valuation of the resources, which helps cloud providers accurately price their services.
Moreover, VCG maximizes the \emph{social welfare}---the aggregate valuation the clients assign to the chosen resource allocation.
For private (corporate) cloud providers, maximizing the social welfare maximizes the aggregate value the in-house clients generate for the corporation.
Cloud clients compete for multiple resources (e.g., RAM, CPU, bandwidth), and these need to be combined in a single auction.
A single resource VCG auction is computationally hard to solve~\cite{maille2007vcg}, and a multi-resource auction is more difficult.

Other solutions, besides auctions, were proposed for mitigating congestion. Posted prices~\cite{CloudSigmaPriceChart} and burstable performance~\cite{AmazonBurstableInstances,AzureBurstableVM,GoogleComputePricing,RackspaceCloudFlavors,CloudSigmaPricing}
incentivize clients to reduce their requirements and hence reduce the congestion.
Spot instances are based on the uniform price auction~\cite{agmonben-yehuda2013deconstructing}.
VCG (or generally affine-maximizer) mechanisms, however, are the only known truthful mechanisms that maximize social welfare~\cite{roberts1979characterization,lavi2003towards}.

The optimization problem for a single-resource VCG auction
can be reduced to a multiple-choice knapsack problem (MCK), which is NP-hard but can be solved in pseudo-polynomial time via dynamic programing~\cite{kellerer2004introduction}.
Many approximated, sub-optimal solutions have been proposed for the MCK problem~\cite{lawler1979fast,chekuri2005polynomial}.
However, for VCG to be truthful, an exact, optimal social welfare must be found~\cite{nisan2007computationally}.
To obtain a more efficient, exact solution for a single resource VCG auction, researchers relax the problem
by requiring all the functions that describe client valuations of a resource allocation (henceforth \emph{valuation functions}) to be monotonically increasing and concave~\cite{lazar1999design,maille2004multi} or usually concave~\cite{agmonben-yehuda2014ginseng}.
Others solve the problem for a single resource when only one function is not concave but is monotonically increasing~\cite{bae2008efficient}.
Concave valuation functions are an unrealistic requirement for cloud clients as their valuation functions have multiple inflection points~\cite{funaro2019stochastic,ye2015rochester,cameron2014we,wilkes2009utility,zhu2006utility,lee2007precise}.

To auction multiple resources, we must consider the relationship between them.
Usually, computing resources are complementary goods:
a client who is willing to pay one dollar for an additional single unit of CPU time and RAM is unwilling to pay anything for each resource individually.
Alternatively, the resources might be substitute goods:
a client who is willing to pay one dollar for an additional single unit of each resource is unwilling to pay two dollars for both resources together.
Thus, in both cases, the client cannot bid in an individual auction for each resource.
If this client partitions its budget between two resources, it may win only one or both.
A client pays for a worthless bundle if it wins only one of two complementary resources, or if it wins both substitute resources.
Such a scenario will also decrease the utilization.
Only a multiple resource auction that considers the clients' value for each combination of resources can both optimize the social welfare and be truthful.

Unfortunately, single resource solutions do not apply for multiple resources.
The multiple resource VCG auction can be reduced to a 
multiple-choice, \emph{multidimensional} knapsack problem (MCMK or d-MCK), which to the best of our knowledge has no pseudo-polynomial solutions.
Similarly to MCK, MCMK also has many approximated solutions~\cite{gao2017iterative,akbar2006solving,moser1997algorithm,khan2002solving,hifi2004heuristic}.
Such solutions provide near-optimal results: the best of them yields results within 6\% of the optimal value, which does not guarantee the auction will be truthful and maximize the social welfare.
Exact solutions for MCMK have been proposed via branch-and-bound algorithms (B\&B)~\cite{ghassemi2018exact,hifi2004exact,sbihi2007best,razzazi2008exact,gonen2000optimal}; 
however, their results indicate an implicit nonpolynomial increase in runtime with respect to the number of possible allocations.
These solutions were only tested empirically with small datasets and did not scale well for many clients and large, complete valuation functions.

Moreover, MCMK solutions were not designed for a VCG auction and thus do not allow efficient calculation of payments according to the VCG payment rule.
To compute a winning client's payment in a VCG auction, the auctioneer must find the social welfare that could be achieved when that winning client is excluded from the auction.
Solutions not tailored to VCG
must compute the payments by repeatedly finding the optimal allocation for each winning client
if that client had not participated in the auction.
This implies a worst-case quadratic complexity with respect to the number of clients.

In this work, we implement an efficient, exact, multi-unit, multidimensional resource VCG auction.
Two approaches can be considered for this problem.
The resources may be treated as infinitely divisible (continuous), as Lazar and Semret~\cite{lazar1999design}, Maill\'e and Tuffin~\cite{maille2004multi}, and Agmon Ben-Yehuda et al.~\cite{agmonben-yehuda2014ginseng} do for a single resource.
The other approach, which we adopt, divides each resource into
 identical units of a predefined size (e.g., a single CPU second can be time-shared as 1000 millisecond units).
The smaller the units are, the closer the auction's result is to the continuous solution, and the higher the complexity of finding the allocation that maximizes the social welfare.
  
In the multi-unit, multi-resource auction, agents, representing the clients, can bid using a multidimensional valuation function, which attaches a monetary value to each number of units of each resource.
To find the exact solution, the auctioneer must consider all the allocations for the number of agents and the number of resource units available.
Since the number of possible divisions of resources between agents is exponential in the number of agents and resource units, iterating over them is impractical. 

We present a method for solving a multi-unit, multi-resource auction 
without any restrictions on the valuation functions, 
in pseudo-near-linear time on average, over all possible realistic valuation functions, 
with respect to the number of clients ($n$) and the number of possible unit allocations for each client ($N$).
Our algorithm's worst-case time complexity is $O(n \cdot N^2)$, as opposed to the worst-case nonpolynomial complexity of the known MCMK algorithms.
Furthermore, our algorithm computes the VCG auction payments without repeating the full auction for each winning client.
The payment calculation complexity is a function of $N$ and the number of winning clients.
It does not depend on the number of clients in the auction ($n$).
Our solution is also applicable to a single resource auction and has a better average complexity than the dynamic programming solution, which is $O(n \cdot N^2)$~\cite{kellerer2004introduction}.
All of the above makes it feasible to choose a VCG auction as a resource allocation mechanism in a real system.

{\bf Our contributions} are an \emph {optimization algorithm} for the multi-unit, multi-resource allocation problem
and an implementation of this algorithm with a choice of data structures to support it.
We prove the correctness of the algorithm in Section~\ref{sec:joint-val-proof} and
numerically analyze its complexity in Section~\ref{sec:complexity}.
We evaluate the performance of our implementation in Section~\ref{sec:results} using each data structure and verify the correctness of the results.
We validate our results for a single resource with concave valuation functions, by comparing to Maill\'e and Tuffin's results, and show that separate single-resource auctions produce sub-optimal results, in contrast to multi-resource auctions, which produce optimal results.
The implementation can be extended using other data structures.
We analyze the algorithm's best possible performance independently of the choice of a data structure.

\section{The Non-Linear Optimization Problem}
\label{sec:nlop}
In this paper, vectorized arithmetic operators are defined element-wise.
For example,
$\vec a + \vec b = (a_0 + b_0, ..., a_n + b_n)$, and 
$\vec a \leq \vec b \Longleftrightarrow \forall i\in{1..R}: a_i\leq b_i$.
The symbols used in this paper are listed in Table~\ref{table:symbol}.
\begin{table}[b]
\centering
\caption{Table of symbols}
\label{table:symbol}
\begin{tabular}{ l l }
n & number of agents \\ \hline
R & number of resources \\ \hline
$\vec{m}$ & number of units for each resource: $(m_1, ... , m_R)$ \\ \hline
$\vec a_i$ & allocation of agent $i$ for each resource: $(a_{i,1}, ..., a_{i,R})$ \\ \hline
$A$ & set of allocations $\{\vec a_i\}_{i=1}^n$ \\ \hline
$V_i$ & valuation function of agent $i \in 1..n$ \\ \hline
$N$ & the number of possible allocations on which a valuation function is defined \\
    & $N = \prod_{r=1}^R{(m_r + 1)}$. \\
\end{tabular}
\end{table}

In an ideal VCG auction, the auctioneer computes the exact allocation that maximizes the social welfare.
Each winning client pays the auctioneer according to the damage it caused the rest of the clients---i.e., the \emph{exclusion compensation principle}.
This payment rule makes the auction \emph{truthful}: the best client strategy is to bid with its true valuation of the resources.
Thus, VCG optimizes the social welfare according to true data about client valuations.

The VCG optimization problem can be described as a non-linear optimization problem (NLP) that is
\emph{separable}, \emph{non-convex}, and \emph{linearly and discretely constrained}, as follows:

\noindent\textbf{Separable}:
The sum of $n$ separable valuation functions is maximized.
\begin{equation}
\text{Maximize:} \sum_i^n{V_i(\vec a_i)} \text{.}
\end{equation}
Such valuation functions can be represented as a multidimensional vector.

\noindent\textbf{Non-Convex}:
None of the separable functions ($V_i$) are required to be convex, concave, or even monotonic.

\noindent\textbf{Linearly Constrained}:
\begin{equation}
\sum_{i=1}^n{\vec a_i} \leq \vec m \text{.}
\end{equation}

\noindent\textbf{Discretely Constrained}: 
The resource is not continuous and is divided into units.
Each $a_{i,r}$ is a natural number (or zero) that represents the number of allocated units.
Only a whole unit can be allocated.
Hence, the $V_i$ functions should be defined only on an even-spaced grid of the natural numbers.

\section{Joint Valuation Algorithm}
\label{sec:joint-valuation-algorithm}
Funaro~et~al.~\cite{funaro2016ginseng} developed the \emph{joint valuation algorithm} for finding the optimal allocation of resources in a single dimension, for monotonically increasing functions with $O(n \cdot N^2)$ time complexity.
In this work, we extend this algorithm to multidimensional non-monotonic valuation functions, such that it fulfills all the constraints delineated in Section~\ref{sec:nlop}.
While the complexity of a na\"ive extension is proportional to the square of the number of possible unit-allocation combinations, our extension has a pseudo-near-linear complexity on average over all possible realistic valuation functions.

We%
 prove that the algorithm produces the correct optimal allocation and the correct payments in Section~\ref{sec:joint-val-proof}, and 
numerically analyze its time complexity in Section~\ref{sec:complexity}.

\subsection{Finding the Optimal Allocation}
To find the optimal allocation, two agents are first combined into one effective agent with a joint valuation function~(Section~\ref{sec:joining-vals}).
For any number and combination of goods that the two agents will obtain together, the joint function stores the optimal division of goods between them, and the sum of the valuations of these agents for this optimal division.
Then another agent is joined to the effective agent, and then another, etc.
This process produces a new joint valuation function at each stage, until the final effective agent's valuation function is the maximal aggregated valuation of all the agents.
Its maximal value is the maximal social welfare.
The optimal allocation is then reconstructed from the stored division data of the joint valuation functions.

\subsection{Payment Computation}
Our algorithm is efficient in the number of times that the optimal allocation must be computed.
To compute a winning agent's payment according to the exclusion compensation principle, the auctioneer must determine the social welfare that could be achieved when that winning agent is excluded from the auction.
This can be na\"ively computed by repeatedly finding the optimal allocation for each winning agent, without its participation in the auction.
Our algorithm, however, reduces the number of repetitions by using a preliminary step.
It re-computes the joint valuation function by joining the agents in reverse order to that taken when first finding the optimal allocation.
For each winning agent $j$, the joint valuation function of the rest of the agents is computed by joining the intermediate effective valuation function right before adding agent $j$, which includes agents $1,..,j-1$, and the one right before adding $j$ in the reverse order, which includes agents $j+1,..,n$.
The maximal value of this function is the maximal social welfare achievable without this agent, as required for the calculation of that agent's payment according to the exclusion compensation principle.

\subsection{Joining Two Valuation Functions}
\label{sec:joining-vals}
To na\"ively join two valuation functions, we need to find, for each possible allocation, how to best divide the resources between the two clients.
For each possible allocation of the joint agents $\vec a_j$, there are $\prod_{r=1}^R{(a_{j,r}+1)}$ possible divisions of the resource.
To compute the full joint valuation function of two clients, each with $N$ possible allocations, the number of possible resource divisions to compare is 
\begin{equation}
\sum_{\substack{\vec a_j \;s.t. \\ \vec a_j \leq \vec m}} {\left( \prod_{r=1}^R{(a_{j,r}+1)} \right)}
= \prod_{r=1}^R{\frac{m_r(m_r+1)}{2}}
= O(N^2) \text{,}
\end{equation}
for four resources, each with 15 units, $N^2 = 2^{16}$.
This number of comparisons will take a few seconds to compute on a standard CPU for each joining of two valuation functions.
For many clients, however, this can add up to a full hour.

The complexity of finding the optimal allocation and the payments depends on the complexity of joining two valuation functions.
Let $\text{J}(N)$ denote the complexity of joining two valuation functions with $N$ possible allocations.
Then the algorithm's time complexity is $O(n \cdot \text{J}(N))$.

We can reduce the complexity of $J(N)$ by reducing the number of compared allocations.
To do so, we filter out allocations that cannot maximize the social welfare.
If an allocation globally maximizes the social welfare, then
(1) it is \emph{Pareto efficient}:
one agent's allocation cannot be improved without hindering another's,
and (2) it is also a \emph{local optimum}:
the aggregated valuation cannot be increased by taking a resource from one agent and giving it to another.

Formally, the Pareto efficiency property means that if the allocation is optimal,
any left partial derivative of any single agent's valuation function is positive: $\partial_{r-} V_i(\vec a_i) > 0$.
The local optimum property means that for an optimal allocation,
any right partial derivative of any single agent's valuation function is no greater than any of the other agents' left partial derivatives: $\partial_{r+} V_i(\vec a_i) \leq \partial_{r-} V_j(\vec a_j)$.
Both are true element-wise for each resource ($r$) dimension.
Since our domain is discrete, partial derivatives are not defined.
We will define the left/right partial derivatives as the difference in the values between adjacent points in the allocation space ($dr=1$ for all the resources).

Using these properties, we restrict the search during the joining of two valuation functions.
We first eliminate client allocations in which the left partial derivative of their valuation function in one of the resource dimensions is non-positive.
Second, for each possible allocation of the first valuation function, we only consider allocations of the second function in which the condition on the partial derivative is maintained.
To accommodate \emph{boundary allocations} (allocations that reside on the valuation function's domain boundary), where the left or right partial derivative is not well defined, we assign the minimal allocation (zero) a left partial derivative of infinity, and assign the maximal allocation ($m_r$ for each resource $r$) a right partial derivative of zero.
We do this because we cannot assign an agent with less than zero or more than the maximal quantity.

These two restrictions will eliminate most of the resource divisions to $O(N)$ comparisons instead of $O(N^2)$, as shown numerically in Section~\ref{sec:complexity} and empirically in Section~\ref{sec:results}.
Algorithm~\ref{alg:joint-val} describes the joining of two valuation functions.

\begin{algorithm}[h]
 \caption{Joining two valuation functions.}
 \label{alg:joint-val}

 \SetAlgoLined
 \SetKwProg{Fn}{Function}{:}{end}
 \SetKw{Call}{call}

 \KwData{$V_i, V_j$: valuation functions}
 \KwResult{$V_r$: joint valuation function, $A_r$: the allocation that produces $V_r$}

 Initialize $V_r$ and $A_r$ to zeros\;
 Calculate $V_i$'s and $V_j$'s gradients and store them into an array of vectors\;
 Remove allocations such that $\partial_{r-} V_i(\vec a_i) \leq 0$ (for each $r$)\label{line:pareto1}\;
 Remove allocations such that $\partial_{r-} V_j(\vec a_j) \leq 0$ (for each $r$)\label{line:pareto2}\;
 \ForEach{$\vec a_i$}{ \label{line:main-loop}
	\ForEach{$\vec a_j$ such that for each $r$: $\partial_{r+} V_i(\vec a_i) \leq \partial_{r-} V_j(\vec a_j)$ and $\partial_{r+} V_j(\vec a_j) \leq \partial_{r-} V_i(\vec a_i)$ and $\vec a_i + \vec a_j \leq \vec m$}{ \label{line:local-optimum}
		$v_r \longleftarrow V_i(\vec a_i) + V_j(\vec a_j)$\;
		$\vec a_r \longleftarrow \vec a_i + \vec a_j$\;

		\If{$V_r(\vec a_r) < v_r$}{
			$V_r(\vec a_r) \longleftarrow v_r$\;
			$A_r(\vec a_r) \longleftarrow \vec a_i, \vec a_j$\;
		}
	}
 }
\end{algorithm}

\subsection{Upper-Bound Limit}
\label{sec:upper-bound-limit}

Eliminating allocations that cannot be Pareto efficient (Lines~\ref{line:pareto1} and \ref{line:pareto2} in Algorithm~\ref{alg:joint-val}) requires verifying a simple lower limit condition on the left partial derivative in the initialization of the algorithm.
The local optimum property (Line~\ref{line:local-optimum} in Algorithm~\ref{alg:joint-val}), however, requires repeated elimination for each loop iteration (Line~\ref{line:main-loop} in Algorithm~\ref{alg:joint-val}) with different multi-dimensional conditions each time.

When joining two valuation functions of agents $i$ and $j$, for each possible allocation $\vec a_i$ of agent $i$, we seek all the allocations $\vec a_j$ of agent $j$ for which the local optimum property is maintained.
Formally, we seek all $\vec a_j$ such that:
\begin{align}
\nabla_{+} V_j(\vec a_j) & \leq   \nabla_{-} V_i(\vec a_i) \\
- \nabla_{-} V_j(\vec a_j) & \leq  -\nabla_{+} V_i(\vec a_i) \\
\vec a_j & \leq \vec m - \vec a_i
\end{align}
where we define
\begin{align}
\nabla_{+} V_i(\vec a_i)
=\left( \partial_{1+} V_i(\vec a_i) , ..., \partial_{R+} V_i(\vec a_i) \right) \\
\nabla_{-} V_i(\vec a_i)
=\left( \partial_{1-} V_i(\vec a_i) , ..., \partial_{R-} V_i(\vec a_i) \right)
\end{align}
as the right and left gradients, respectively.

Each of these inequalities defines $R$ upper-bound requirements on agent $j$'s allocation, for a total of $3R$ requirements.
For each of agent $i$'s possible allocations, we need to efficiently find agent $j$'s allocations that match these requirements.
To do so, we preprocess agent $j$'s valuation function using a dedicated upper-bound data structure that allows efficient retrieval of allocations that match these requirements.
We map each possible allocation of agent $j$ ($\vec a_j$) to a new $3R$-dimensional vector:
\begin{equation}
\label{eq:vector}
\left(\nabla_{+} V_j(\vec a_j), -\nabla_{-} V_j(\vec a_j), \vec a_j \right) \text{.}
\end{equation}
We store these vectors in a \emph{$k$-dimensional upper-bound data structure}, where $k=3R$.
The data structure will contain a total of $N$ vectors and thus its complexity will depend on $k$ and $N$.
Then, for each possible allocation of agent $i$ ($\vec a_i$), we query all the vectors (defined in Equation~\ref{eq:vector}) that are smaller than or equal to the following vector:
\begin{equation}
\left(\nabla_{-} V_i(\vec a_i), -\nabla_{+} V_i(\vec a_i), \vec m - \vec a_i \right) \text{.}
\end{equation}

The $k$-dimensional upper-bound data structure must support the following methods:
\begin{itemize}
    \item \texttt{construct(all vectors)}: create the data structure.
    \item \texttt{query(vector)}: find all the vectors that are smaller than or equal to a certain vector (element-wise).
    \item \texttt{fetch()}: return all the vectors that match the last query.
\end{itemize}

We consider the $k$-dimensional ($k$-d) binary search trees
that are listed in Table~\ref{table:data-structures} along with their space and time complexities.
The complexity of result fetching is linear with the number of returned vectors and not with the number of matching vectors, because some data structures trade accuracy for efficiency, returning false positives.
\begin{table}[h]
\caption{Upper-bound data structure comparison.}
\label{table:data-structures}
\centering
\begin{tabular}{l | l l l}
                           & \multicolumn{3}{c}{Complexity ($O(...)$)}  \\ \hline
Data Structure             & Construct        & Query           & Space \\ \hline

$k$-d Tree~\cite{lueker1978data}
                           & $N\log^{k-1}{N}$ & $\log^{k-1}{N}$  & $N\log^{k-1}{N}$ \\ \hline

Simultaneous $1$-d Bin. Searches
                           & $kN\log{N}$      & $k\log{N}$       & $kN$             \\ \hline

Simultaneous $2$-d Trees
                           & $kN\log{N}$      & $k\log{N}$       & $kN\log{N}$      \\ 
\end{tabular}
\end{table}

\subsubsection{$k$-d Tree}
Algorithm~\ref{alg:lueker-construct} describes the construction of this tree.

\begin{algorithm}[h]
 \caption{$k$-d Tree Construction}
 \label{alg:lueker-construct}

 \SetAlgoLined
 \SetKwProg{Fn}{Function}{:}{end}
 \SetKw{Call}{call}

 \KwData{$v$: an array of $N$ vectors}
 \KwResult{a $k$-d Tree}

 \Call{RecursiveBuild($1$, $N$, $v$)}\;
 
 \Fn{RecursiveBuild($d$: sort dimension, $M$: array size, $v$: array of $M$ vectors)}{
  Sort the array of vectors ($v$) by dimension $d$\;
  \If{$d$ < $k$} {
	Create $\log M$ copies of the input array\;
	Partition each copied array $t=1..\log M$ into $2^{t-1}$ even parts\;
	\ForEach{$v_i$: array partition}{
		\Call{RecursiveBuild($d+1$, $v_i$, $v_i$ size)}\;
	}
  }
 }
\end{algorithm}

Figure~\ref{fig:binary-search-tree} shows an example of a four-dimensional binary search tree.
Each letter in the example represents a four-dimensional vector.
The initial array (d1) is sorted by the first dimension.
Each of the following blocks (d2, d3, d4) is built from a sorted array created on the previous block.
In this example, we partition the array up to partitions of the size of two, as creating a sub-tree of one vector is not useful.

\begin{figure}[h]     \centering
    \includegraphics[width=0.7\linewidth]{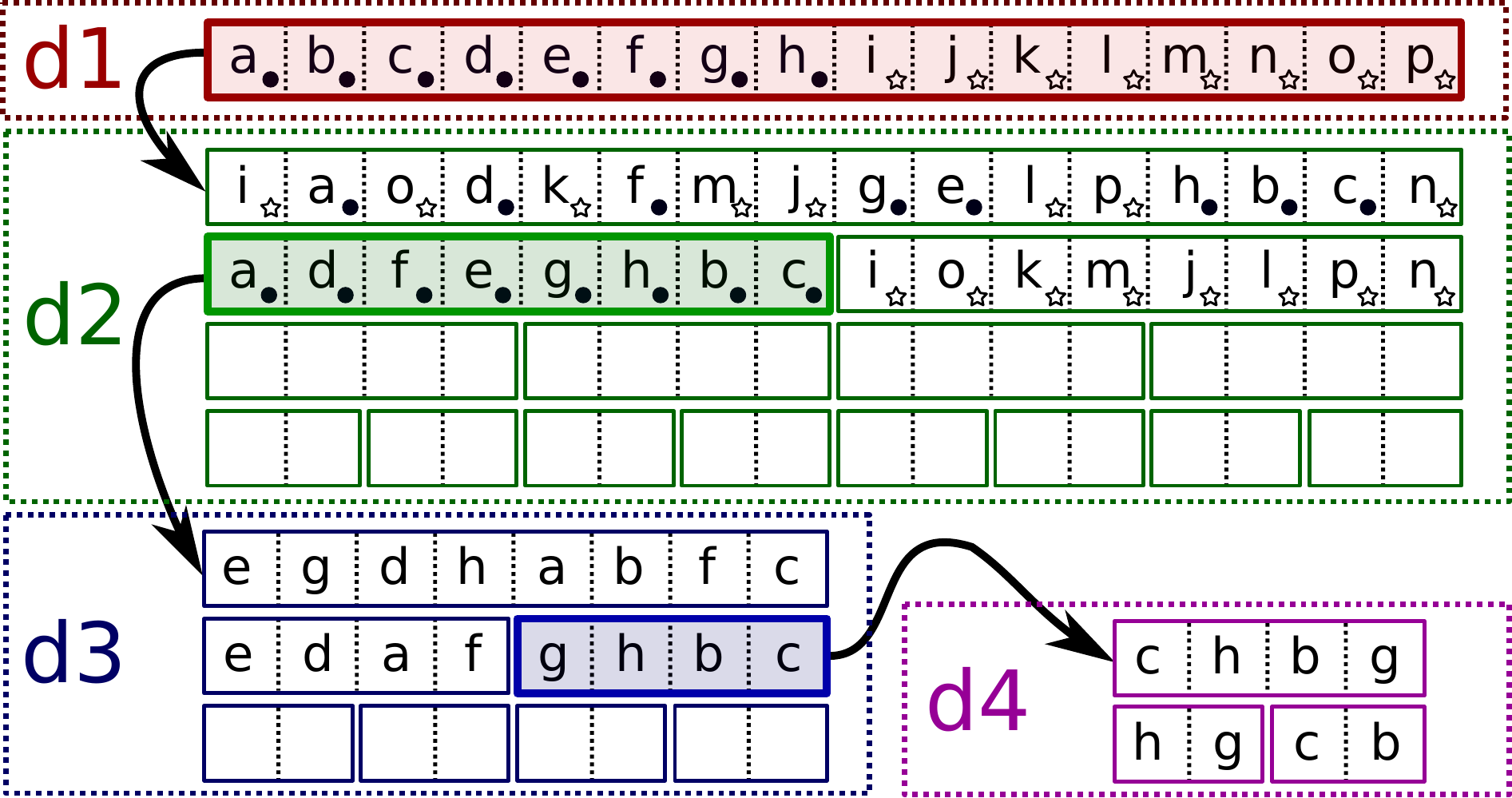}
    \caption{Example of a four-dimensional binary search tree.
    Each letter represents a four-dimensional vector.
    Each array in a rectangle marked with \emph{dX} is sorted by dimension X.}
    \label{fig:binary-search-tree}
\end{figure}

To query, we do a binary search by the first dimension on the first sorted array.
Each time the binary search continues to the upper half of the array---i.e., all the vectors in the lower half are smaller in that dimension than the query---the vectors in the lower half are filtered by the next dimension, by recursively running the query on the sub-tree created from the lower half of the array.
Then, the search continues to the upper half.
Finally, in the deepest sub-tree, we simply return all the vectors that are lower than the position returned by the binary search.
For example, starting from array d1 in Figure~\ref{fig:binary-search-tree},
if the query is larger than vector $h$, the binary search will continue to the upper half of the array ($i$ to $p$) and recursively run the query on the array with the bold frame in block d2.

This data structure never returns false positives but has prohibitive time and space complexity.
For example, four resources require a 12-dimensional data structure
with\break $O(N \log^{11}(N))$ memory and time complexity.
Even for a small $N$, e.g., $N=1024$, this can consume an entire machine.
Thus, we did not test the performance of this data structure.
Following are more efficient methods that reduce the complexity by reducing the accuracy of the results.

\subsubsection{Simultaneous $1$-d Binary Searches}
We store $k$ arrays, each sorted according to another dimension.
For each upper-bound query, we perform a \emph{simultaneous binary search} on all the arrays.
That is, instead of searching one array at a time, we perform each step on all the sorted arrays simultaneously.
In each step, some array searches continue to the lower half and some to the upper half.
We continue searching only with the array searches that continue to the lower half.
If all of the searches continue to the upper half, we continue with all of them.
When the search finishes, we have found the dimension that filters the most vectors independently of the other dimensions.
We will return all the vectors that are lower than the position the search found.
This is time and space efficient, but yields many false positives, because we only filter by one dimension.

\subsubsection{Simultaneous $2$-d Trees}
A multidimensional binary search tree problem can be relaxed by constructing many two-dimensional binary search trees, each sorted according to a different combination of two dimensions.
Then, all of them can be queried, and the vectors fetched from the tree whose query returned the least vectors.

To make the multiple queries more efficient, we use only a subset of the combinations that we believe will filter the most vectors:
all the combinations of two dimensions that originate from the same resource $r$:
\begin{equation}
\left( \partial_{r+} V_i(\vec a_i), -\partial_{r-} V_i(\vec a_i), a_{i,r} \right) \text{.}
\end{equation}
We also construct the trees in a way that reduces the number of repeated queries:
we first build $3R$ arrays, each sorted by a different dimension.
Then, from each array, we build two trees, each for the other two dimensions that originated from the same resource.

Following this, an upper-bound query is implemented:
(1) First, a simultaneous $1$-d binary search is performed on each sorted main dimension array.
The partitions that had to be searched when the binary search continued to the upper half are stored for later.
When the search is finished, the main dimension that found the lowest upper bound is chosen (or one of them is chosen if more than one remained).
(2) Next, each stored partition is searched simultaneously in the two sub-trees of the chosen main dimension. 
For each simultaneous search in the stored partition, we will return the results from the one that yielded the fewest vectors.

The query, as described above, will require one simultaneous binary search on $3R$ arrays, then at most $\log N$ simultaneous searches on two arrays.
This results in considerably fewer searches than when searching each combination of two dimensions individually.
Consequently, the time and space complexity of simultaneous $2$-d trees are not much higher than they are for simultaneous $1$-d binary searches,
but the former yields considerably fewer false positives.

\subsubsection{Combination}
Many of the vectors were created from a boundary allocation (having maximal or zero allocation in one of the resources).
Boundary allocations have a minimal partial derivative in the direction of the boundary; hence boundary allocations are never filtered by the dimensions that correspond to the partial derivative in that direction.
We can classify vectors according to their boundary type (which domain boundaries the vector's allocation resides on), and filter each class only by the \emph{vital dimensions}: those with a higher value than the minimal.
For vectors with only one vital dimension, we use simultaneous 1-d binary searches, for those with two we use a single $2$-d tree and, for those with more, we use simultaneous $2$-d trees.
This reduces both the construction time
and the query time, as the trees are smaller and each is filtered only by vital dimensions.

\section{Correctness Proof}
\label{sec:joint-val-proof}
We prove that our algorithm produces correct results, i.e., an allocation that maximizes the social welfare.

\subsection{Notations}
We use the notations from Table~\ref{table:symbol}.
Let $P={\{i\}}_{i=1}^n$ denote the set of all agents.
We define an allocation for any subset of agents $G\subseteq P$ and for maximal quantities of allocatable goods $\vec m$ as follows:
\begin{equation}
A_G^{\vec m}=\{\vec a_i\}_{i \in G} \text{.}
\end{equation}

We denote agent $i$'s valuation for an allocation $A_G^{\vec m}$ as
\begin{equation}
V_i(A_G^{\vec m})=
\begin{cases}
V_i(A_G^{\vec m}[i]), & \text{if } i\in G \\
0,                    & \text{otherwise,}
\end{cases}
\end{equation}
where $A_G^{\vec m}[i]$ is the allocation of agent $i$.

For any subset $H \subseteq G\subseteq P$ under the allocation $A_G^{\vec m}$,
we denote by $V_H(A_G^{\vec m})$ the aggregated valuation of the agents in $H$ under this allocation,
by $\vec S_H(A_G^{\vec m})$ the sum of resources allocated to the agents in $H$ under this allocation (element wise),
and by $E_H(A_G^{\vec m})$ the subset of allocations of the agents in $H$ under this allocation.
Formally,
\begin{align}
V_H(A_G^{\vec m}) & = \Sigma_{i \in {H}} (V_i(A_G^{\vec m})) \\ 
S_H(A_G^{\vec m}) & = \Sigma_{i \in {H}} (\vec a_i) \\
E_H(A_G^{\vec m}) & = \{\vec a_i\}_{i \in {H}} 
\end{align}
The social welfare of an allocation is defined as the aggregated sum of all the agents' valuations for that allocation, i.e., $\text{SW}(A_P^{\vec m})=V_P(A_P^{\vec m})$.

An allocation $A_G^{\vec m}$ is valid if $S(A_G^{\vec m})\leq \vec m$.
A valid allocation $A_G^{\vec m}$ is optimal if it maximizes the aggregated valuation:
\begin{equation}
\forall B_G^{\vec m}: V_G(B_G^{\vec m}) \leq V_G(A_G^{\vec m}) \text{,}
\end{equation}
where $B_G^{\vec m}$ is a valid allocation.

\subsection{Supporting Lemma}
In this subsection we will prove Lemma~\ref{claim:subgroup}, which supports the use of the additive process of joining the valuations one by one.
Following (Section~\ref{sec:proof-induction}) is a proof by induction that uses Lemma~\ref{claim:subgroup} to prove the optimality of the results.

\begin{lem}
\label{claim:subgroup}
For any optimal allocation $\hat A_P^{\vec m}$ and any subset of agents $G\subseteq P$,
the allocations of the agents in $G$ are also optimal for the case where
the agents in $G$ are the only agents and the number of allocatable units is exactly the sum of
their allocations.
That is, $V_G(\hat A_P^{\vec m}) = V_G(\hat A_G^{\vec m_G})$, where $\vec m_G=\vec S_G(\hat A_P^{\vec m})$.
\end{lem}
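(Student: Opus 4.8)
The plan is to prove the equality by a standard exchange (cut-and-paste) argument, establishing the two inequalities $V_G(\hat A_G^{\vec m_G}) \geq V_G(\hat A_P^{\vec m})$ and $V_G(\hat A_G^{\vec m_G}) \leq V_G(\hat A_P^{\vec m})$ separately. The first is immediate from the definition of optimality: the restriction of the global optimum $\hat A_P^{\vec m}$ to the agents in $G$ is itself a valid allocation of the subproblem, since it assigns the agents in $G$ exactly $\vec S_G(\hat A_P^{\vec m}) = \vec m_G$ units and hence respects the subproblem budget $\vec m_G$ (with equality). Because $\hat A_G^{\vec m_G}$ is by definition optimal for the subproblem, it cannot be worse than this particular valid allocation, giving $V_G(\hat A_G^{\vec m_G}) \geq V_G(\hat A_P^{\vec m})$.

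For the reverse inequality I would argue by contradiction. Suppose $V_G(\hat A_G^{\vec m_G}) > V_G(\hat A_P^{\vec m})$. I would then build a new global allocation $B_P^{\vec m}$ by pasting the two pieces together: assign the agents in $G$ their allocation taken from $\hat A_G^{\vec m_G}$, and leave the agents in $P \setminus G$ with their original allocation from $\hat A_P^{\vec m}$. The crucial step is to verify that $B_P^{\vec m}$ is valid: the agents in $G$ consume at most $\vec m_G = \vec S_G(\hat A_P^{\vec m})$ units and the agents in $P\setminus G$ consume $\vec S_{P\setminus G}(\hat A_P^{\vec m})$ units, so the total is at most $\vec S_G(\hat A_P^{\vec m}) + \vec S_{P\setminus G}(\hat A_P^{\vec m}) = \vec S_P(\hat A_P^{\vec m}) \leq \vec m$, where the final inequality holds because $\hat A_P^{\vec m}$ is itself valid. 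By \textbf{separability} of the objective, the valuation of the agents in $P\setminus G$ is untouched by reassigning the agents in $G$, so $\text{SW}(B_P^{\vec m}) = V_G(\hat A_G^{\vec m_G}) + V_{P\setminus G}(\hat A_P^{\vec m})$, which by the assumption strictly exceeds $V_G(\hat A_P^{\vec m}) + V_{P\setminus G}(\hat A_P^{\vec m}) = \text{SW}(\hat A_P^{\vec m})$. This contradicts the optimality of $\hat A_P^{\vec m}$, so the assumption must fail and $V_G(\hat A_G^{\vec m_G}) \leq V_G(\hat A_P^{\vec m})$. Combining the two inequalities yields the claimed equality.

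I expect the main thing to get right is the validity bookkeeping of the pasted allocation $B_P^{\vec m}$. It hinges on setting the subproblem budget to be \emph{exactly} $\vec m_G = \vec S_G(\hat A_P^{\vec m})$ rather than anything larger, so that the units freed by the agents in $G$ and the units reused by the agents in $P\setminus G$ sum (element-wise) to at most $\vec m$. The separability assumption is what guarantees that swapping in a better sub-allocation for $G$ leaves the complement's contribution $V_{P\setminus G}(\hat A_P^{\vec m})$ unchanged; without it the cut-and-paste would not preserve the complement's value and the contradiction would not go through.
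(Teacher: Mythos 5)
Your proof is correct and takes essentially the same approach as the paper's: your cut-and-paste construction for the inequality $V_G(\hat A_G^{\vec m_G}) \leq V_G(\hat A_P^{\vec m})$ is exactly the paper's first case (building the combined allocation $\tilde A$ and checking its validity via $\vec m_G \geq \vec S_G(\hat A_G^{\vec m_G})$), and your direct argument for the reverse inequality is the paper's second case, merely stated directly rather than as a contradiction. The validity bookkeeping you highlight is the same as the paper's, so there is nothing to add.
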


\begin{proof}
Assume the claim is false. Then, there exists an optimal allocation $\hat A_P^{\vec m}$ 
and a subset $G\subseteq P$, such that the allocations of the agents in $G$ 
are not optimal for the case where these agents
are the only agents and the number of allocatable units is exactly the sum of
their allocations.
That is, $V_G(\hat A_P^{\vec m}) \neq V_G(\hat A_G^{\vec m_G})$, where $\vec m_G=\vec S_G(\hat A_P^{\vec m})$.
There are two cases:

\begin{case}[$V_G(\hat A_P^{\vec m}) < V_G(\hat A_G^{\vec m_G})$]
Combine $\hat A_P^{\vec m}$ and $\hat A_G^{\vec m_G}$ to 
create a new allocation $\tilde A$ such that the agents in $G$ get 
the resources they get under $\hat A_G^{\vec m_G}$, and the rest of the agents get the resources they get under $\hat A_P^{\vec m}$.
The new allocation  $\tilde A$ is valid because $\hat A_G^{\vec m_G}$ is valid, and 
$\vec S_G(\hat A_P^{\vec m})=\vec m_G \geq \vec S_G(\hat A_G^{\vec m_G})$, so 
$\vec S_P(\tilde A) \leq \vec m$.
According to the assumption, 
\begin{equation}
\label{eq:smaller-case}
V_G(\hat A_P^{\vec m})  < V_G(\hat A_G^{\vec m_G}) \text{,}
\end{equation} 
and thus
\begin{equation}
V_P(\hat A_P^{\vec m})= V_G(\hat A_P^{\vec m}) + V_{P \setminus G}(\hat A_P^{\vec m}) \text{,}
\end{equation}
which according to~\eqref{eq:smaller-case} is smaller than
\begin{equation}
 V_P(\hat A_G^{\vec m_G}) + V_{P \setminus G}(\hat A_P^{\vec m}) = V(\tilde A) \text{,}
\end{equation}
in contradiction to the optimality of allocation $\hat A_P^{\vec m}$.
\end{case}

\begin{case}[$V_G(\hat A_P^{\vec m}) > V(\hat A_G^{\vec m_G})$]
Since $\vec S_G(\hat A_P^{\vec m})=\vec m_G$, then $E_G(\hat A_P^{\vec m})$ is a valid allocation for the subset of agents $G$ with maximal allocatable resources of $\vec m_G$, and it yields a higher aggregated value than $\hat A_G^{\vec m_G}$,
in contradiction to the optimality of the allocation $\hat A_G^{\vec m_G}$.
\end{case}
\end{proof}

\subsection{Proof by Induction}
\label{sec:proof-induction}
Our algorithm joins valuations into an accumulated valuation one by one.
At each step, for each number of resources $\vec m$, the algorithm iterates over all possible combination of resources $\vec m_i, \vec m_j$ such that $\vec m_i + \vec m_j \leq \vec m$.
Then, for each $\vec m$, the algorithm chooses the $\vec m_i, \vec m_j$ that yielded the maximal aggregated value.
Finally, we choose $\vec m$ that yields the maximal value in the final joint valuation function.

We prove by induction that the above algorithm finds an optimal allocation.
For generality, we do not assume that the joining of the valuations is done in any particular order.
Instead, at each step, any two valuation functions might be joined to form a single effective one.

\begin{thm}
For a subset of agents $G$ and allocatable quantities $\vec m$ of goods, the algorithm finds an optimal allocation $\hat A_G^{\vec m}$.
\end{thm}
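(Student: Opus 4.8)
The plan is to prove the statement by \textbf{induction on the number of agents} $|G|$, mirroring the recursive structure by which the algorithm builds its accumulated valuation function one join at a time. The fact that makes the induction go through is Lemma~\ref{claim:subgroup}: in any global optimum, the sub-allocation restricted to a subgroup is itself optimal for that subgroup given exactly the resources it received. This is precisely what licenses solving the problem by combining independently-optimal subproblems, which is what each join does.

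For the \textbf{base case} $|G| = 1$, the effective valuation function is just the single agent's valuation $V_i$, and the algorithm's choice of the allocation maximizing $V_i(\vec a_i)$ subject to $\vec a_i \leq \vec m$ is optimal by definition. For the \textbf{inductive step}, consider the final join that produces the effective agent for $G$. It combines two effective agents corresponding to a partition $G = G_1 \sqcup G_2$ (since the statement deliberately assumes no fixed join order, I treat an arbitrary partition). By the inductive hypothesis, for every $\vec m_1, \vec m_2 \leq \vec m$ the algorithm has already stored the \emph{optimal} values $V_{G_1}(\hat A_{G_1}^{\vec m_1})$ and $V_{G_2}(\hat A_{G_2}^{\vec m_2})$ together with allocations achieving them.

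The heart of the argument is a two-sided comparison. For the upper-bound ($\geq$) direction, every split $\vec m_1 + \vec m_2 \leq \vec m$ examined by the algorithm yields, by concatenating the two stored sub-allocations, a \emph{valid} allocation of $G$ whose value equals $V_{G_1}(\hat A_{G_1}^{\vec m_1}) + V_{G_2}(\hat A_{G_2}^{\vec m_2})$; hence the value the algorithm reports (the maximum over all such splits) is attained by a genuine allocation and so cannot exceed the true optimum. For the lower-bound ($\leq$) direction, take a true optimum $\hat A_G^{\vec m}$ and set $\vec m_1 = \vec S_{G_1}(\hat A_G^{\vec m})$ and $\vec m_2 = \vec S_{G_2}(\hat A_G^{\vec m})$, which satisfy $\vec m_1 + \vec m_2 = \vec S_G(\hat A_G^{\vec m}) \leq \vec m$ and so form one of the enumerated splits. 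Applying Lemma~\ref{claim:subgroup} to $G_1$ and to $G_2$ gives $V_{G_1}(\hat A_G^{\vec m}) = V_{G_1}(\hat A_{G_1}^{\vec m_1})$ and $V_{G_2}(\hat A_G^{\vec m}) = V_{G_2}(\hat A_{G_2}^{\vec m_2})$, so the optimum decomposes as $V_G(\hat A_G^{\vec m}) = V_{G_1}(\hat A_{G_1}^{\vec m_1}) + V_{G_2}(\hat A_{G_2}^{\vec m_2})$. Since the right-hand side is one of the candidates the algorithm maximizes over, the reported value is at least the optimum. Combining the two directions yields equality, and the stored division data reconstructs a matching allocation.

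The step I expect to require the most care is the \textbf{lower-bound direction}, since it is where Lemma~\ref{claim:subgroup} must be invoked correctly: one has to verify that restricting the global optimum to $G_1$ (resp.\ $G_2$) is optimal for that subgroup \emph{at the budget it actually consumed}, $\vec m_1$ (resp.\ $\vec m_2$), rather than at $\vec m$, so that the inductive hypothesis about $\hat A_{G_1}^{\vec m_1}$ applies at the right budget. A secondary subtlety, if the correctness claim is to cover the \emph{implemented} (filtered) algorithm rather than the abstract enumerate-all-splits version, is to argue that the Pareto-efficiency and local-optimum pruning in Algorithm~\ref{alg:joint-val} never discards a split realizing the optimum; this is handled by noting that any globally optimal allocation satisfies both pruning conditions, so the retained candidate set still contains the decomposition constructed above.
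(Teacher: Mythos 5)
Your proof follows essentially the same route as the paper's: induction on $|G|$ with a single-agent base case, and an inductive step that partitions $G$ into two subsets, uses Lemma~\ref{claim:subgroup} to decompose a true optimum into subgroup optima at the budgets $\vec m_1 = \vec S_{G_1}(\hat A_G^{\vec m})$ and $\vec m_2 = \vec S_{G_2}(\hat A_G^{\vec m})$ actually consumed, and observes that this split is among those the algorithm enumerates. Your two additions---stating the upper-bound direction explicitly (that every enumerated split yields a valid allocation, so the reported value cannot exceed the optimum) and noting that the pruning in Algorithm~\ref{alg:joint-val} never discards an optimal split---are sound refinements of points the paper leaves implicit.
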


\begin{proof}

\begin{case}[$|G|=1$]
For one agent, no joining of two valuations is needed.
The algorithm simply chooses the maximal valuation for any allocation up to $\vec m$.
This is the maximum social welfare by definition.
\end{case}

\begin{case}[Inductive hypothesis]
Suppose the theorem holds when $|G|\leq k$, for some
$k \geq 1$.
Let $|G|=k+1$.

Consider any two non-empty, disjoint subsets: $X$ and $Y$, where $X\cup Y=G$.
By the pigeonhole principle, $|X|\leq k$ and $|Y|\leq k$, and
\begin{equation}
\vec S_G(\hat A_{G}^{\vec m}) = \vec S_X(\hat A_G^{\vec m}) + \vec S_Y(\hat A_G^{\vec m}) \leq \vec m
\end{equation}
since the optimality of allocation $\hat A_{G}^{\vec m}$ implies it is valid.

Let us denote
$\vec m_X=\vec S_X(\hat A_G^{\vec m})$, $\vec m_Y=\vec S_Y(\hat A_G^{\vec m})$.
Then
\begin{equation}
\vec m_X + \vec  m_Y \leq \vec m.
\end{equation}
According to Lemma~\ref{claim:subgroup},
\begin{align}
V_X(\hat A_G^{\vec m}) & = V_X(\hat A_X^{\vec m_X}) \\
V_Y(\hat A_G^{\vec m}) & = V_Y(\hat A_Y^{\vec m_Y})
\end{align}
$\Longrightarrow$ $V_X(\hat A_X^{\vec m_X}) + V_Y(\hat A_Y^{\vec m_Y}) = V_G(\hat A_G^{\vec m})$.

Hence, since we search all the options where $\vec m_X + \vec m_Y \leq \vec m$
and find optimal allocations $\hat A_X^{\vec m_X}$, $\hat A_Y^{\vec m_Y}$ for each of them, we must encounter an allocation with the above aggregated valuation.
Because it is the maximal value, our algorithm will prefer this allocation to the alternatives.
So, the theorem holds for $|G|=k+1$.
\end{case}

By induction, the theorem holds for every size of $G$.
\end{proof}

\section{Complexity Analysis of Joining Two Valuations}
\label{sec:complexity}
We first show the worst-case time complexity of $O(N^2)$, which may be relevant only in unrealistic scenarios.
Then, we analyze the worst-case complexity of a single resource over realistic valuation functions, and find it equal $O(N \log{N})$.
Finally, we show that multiple resources yield the same time complexity, but on average over all possible realistic valuation functions.

\subsection{Worst Case}
The worst case complexity of joining two valuation functions is $O(N^2)$, 
when for every query, the number of matching allocations is proportionate to $N$.
This can happen, for example, when both valuation functions are linear, with an identical slope.
Any of the $N$ queries on one of the functions will return every allocation ($O(N)$), as 
the upper-bound limit is inclusive.
This adversarial example, however, is unlikely on a real cloud, with a mixture of clients and valuation functions, and where precise linear scaling is rare.
We will thus consider in the following only strictly convex/concave functions, i.e., without any precise linear parts.

\subsection{Single Resource}

To analyze the  complexity we will assume $N \rightarrow \infty$, which approximates a smooth continuous function were the left partial derivative is equal to the right.
This reduces the local optimum property to a single rule: for an optimal allocation, all the agents' valuation functions have identical identical gradients.

For a single resource with concave/convex valuation functions, each
derivative value
is obtained at most once.
Hence, each query will match at most one allocation.
For a function with one or more inflection points, each query will match a number of allocations up to the number of inflection points in the function.
The number of inflection points is related to the number of hierarchies in the resource.
For example, a CPU might have two inflection points: when switching from a single-core to multiple-cores, and then to multiple-chips.
Memory might also have two inflection points when switching between cache, RAM and storage.
Five inflection points, however, might be considered unrealistically high for computing resource valuation functions.
Thus, we consider the number of possible inflection points for each resource to be a constant as it is independent on the parameters ($n$, $N$ and $R$) and is generally small.
This yields a maximal complexity of $O(N)$.

The time complexity of joining two valuation functions is at least $O(N \log{N})$, the data structure construction complexity.
Hence, the complexity of joining two valuation functions is $O(N \log{N})$.

\subsection{Multiple Resources}
Similarly to a single resource, for multiple resources with concave/convex valuation functions, each gradient vector is obtained at most once.
We
can consider each resource to have inflection points independently of the other resources, e.g., it is possible to switch from a single processor to a multi-processor algorithm regardless of the RAM usage.
Thus, if each resource has $t$ inflection points, we can divide the valuation function domain into $(t+1)^R$ sections, each being convex or concave.
That is, each gradient vector might be obtained at most once in each of these sections.
The actual number of matches is much lower than $(t+1)^R$, and is constant as shown in Section~\ref{sec:potential-analysis}.

We reconcile these differences by showing that the average case, over all possible realistic valuation functions yields a constant number of matching allocations.
To do this, we will assume without loss of generality that the partial derivatives on each of the inflection points and in the function boundaries distribute uniformly from zero to the maximal derivative.
The partial derivatives of the required gradient will also distribute uniformly with the same boundaries.
Then, for exactly two inflection points per resource, we will have three sections, each with different uniformly distributed boundaries.
The probability of a single derivative that is uniformly distributed to be in these boundaries is $\frac{1}{3}$, and thus, for each resource, exactly one section is expected to have this gradient.
Thus, regardless of the number of resources $R$, exactly one section is expected to have the required gradient (out of the total $(t+1)^R$).
Since only a single matching allocation exists in that section, the expected number of matching allocations is exactly one.

Furthermore, if we assume that the required gradient has different derivative boundaries, as we would expect in the real world, then a higher number of inflection points will yield a single matching section as well.
If the first client's valuation function has a maximal derivative $d$ times higher than the second, then $\lfloor 3 \cdot d - 1 \rfloor$ number of inflection points per resource will yield at most one matching allocation per query.
Since the joint valuation function is expected to have higher derivatives with each joining, we would expect $d$ to grow in each step, and thus reduce the number of matching allocations.
This yields an average complexity of $O(N)$ over realistic valuation functions.

Hence, similarly to a single resource, the complexity of joining two multi-resource valuation functions is $O(N \log{N})$.

\section{Evaluation}
\label{sec:evaluation}
Here we empirically evaluate the algorithm's complexity, and verify that our implementation is efficient enough to be applicable in a real system.

\subsection{Implementation Details}
We implemented the joint function algorithm and Maill\'e and Tuffin's~\cite{maille2004multi} algorithm in C++ and Python.
The code is available as open source\footnote{Available from: \url{https://bitbucket.org/funaro/vecfunc-vcg}.}.

The joining of two valuation functions and the upper-bound data structures were  implemented in C++.
The algorithm can accept any upper-bound data structure as a template parameter.
We implemented the na\"ive joining in C++ as well.
Both implementations accept two $R$-dimensional tensors, which represent the clients' valuation functions (or effective joint valuation functions), and return an $R$-dimensional tensor, which is the joint valuation function.
The C++ library is called (via a Python wrapper) to join the functions one by one, and the allocation and payment calculations are implemented in Python.

Our C++ implementation of Maill\'e~and~Tuffin's~\cite{maille2004multi} algorithm accepts all the clients' bids and returns the optimal allocation.
This C++ implementation is called once (via a Python wrapper) to compute the optimal allocations, and then again for each winning client to compute the payments.

\subsection{Benchmark Dataset}
\label{sec:datasets}
We considered three different types of datasets: \emph{concave}, \emph{increasing}, and \emph{mostly-increasing}.
We produced 10 datasets of each type, each with 256 clients that participate in the VCG auction.
The \emph{concave} datasets contain concave, strictly increasing valuation functions. 
These datasets are used to compare our results to Maill\'e and Tuffin's method, 
where the types of valuation functions are very restricted~\cite{maille2004multi}.
The \emph{increasing} datasets include weakly increasing valuation functions that might not be concave.
This is our main test case as real-life valuation functions may have multiple inflection points~\cite{funaro2019stochastic,ye2015rochester,cameron2014we,wilkes2009utility,zhu2006utility,lee2007precise}.
Valuation functions, however, are not expected to decrease when more resources are offered,
if these resources can be freely discarded.
The \emph{mostly-increasing} datasets include valuation functions with multiple maximum points (non-monotonic).
Such functions will increase for a large part of their input, but may occasionally decrease.
They are realistic when the hindering resources are not disposed of, as is the case,
for example, when allocating more RAM lengthens garbage collection time and performance drops~\cite{agmonben-yehuda2014ginseng,yang2006cramm}.
We use these datasets to show that our algorithm performs well even with non-monotonic functions.
We did not test strictly convex valuation functions as they are not realistic.

For each client, we produced an $R$-dimensional valuation function
($V_i: [0,1]^R\in \mathbb{R}^R \mapsto{[0,\infty)\in \mathbb{R}}$),
which it uses as its bid.
We generated $R$ intermediate single-dimensional functions
($v_i^r: [0,1]\in \mathbb{R} \mapsto{[0,1]\in \mathbb{R}}$) without loss of generality,
where an input value of $1$ represents the entire available resource $r$, and an output of $1$ represents the client's maximal valuation of the resource.

To compute a client's valuation function---i.e., its bid for each bundle of units---for each single-dimensional function, we sampled a vector sized according to the number of available units for each resource and computed the vectors' tensor product:
$V_i = v_i^1 \otimes ... \otimes v_i^R$.
This yielded an $R$-dimensional tensor with values in the range of $[0,1]\in \mathbb{R}$.
To produce a valuation function of fewer than $R$ dimensions ($0<r<R$), we used the same dataset but only with the first $r$ intermediate single-dimensional functions.

We modeled the clients' maximal valuations using data from Azure's public dataset~\cite{cortez2017resource}, which includes information on Azure's cloud clients, such as the bundle rented by each client.
Assuming the client is rational, the cost of the bundle is a lower bound on the client's valuation of this bundle.
We modeled the clients' expected revenue using a Pareto distribution (standard in economics) with an index of $1.1$.
A Pareto distribution with this parameter translates to the 80-20 rule: 20\% of the population has 80\% of the valuation, which is reasonable for income distributions~\cite{souma2001universal}.

For each client, we drew a value from this Pareto distribution, with the condition that the value is higher than the client's bundle cost (i.e., a conditional probability distribution).
We then multiplied each client's $R$-dimensional tensor with the maximal value drawn from the Pareto distribution, to produce the client's valuation function.

\subsection{Experimental Setup}
We evaluated our algorithm on a machine with 16GB of RAM and two Intel(R) Xeon(R) E5-2420 CPUs @ 1.90GHz with 15MB LLC. 
Each CPU had six cores with hyper-threading enabled, for a total of 24 hardware threads.
The host ran Linux with kernel 4.8.0-58-generic \#63\texttildelow{}16.04.1-Ubuntu.
To reduce measurement noise, we tested using a single core, leaving the rest idle.

\section{Results}
\label{sec:results}

The combination of data structures was chosen for the purpose of the evaluation as it performed the best.
This is shown in the data structure comparison in Section~\ref{sec:ds-analysis}.

Our algorithm scales linearly to the number of possible allocations ($N$), for any number of resources, as depicted in Figure~\ref{fig:killer}.
Although the performance differences between the concave, increasing and mostly-increasing datasets were insignificant, we can see that our algorithm performs better on the mostly-increasing dataset.
This is because more allocations were eliminated in the preprocessing phase due to their negative left partial derivative.
This preprocessing was included in the algorithm's runtime.

\begin{figure}[t]	\centering

	\hfill
    \subfloat[Runtime of each joining.\label{fig:killer-join-time}]{    	\includegraphics[width=0.45\linewidth,valign=t]{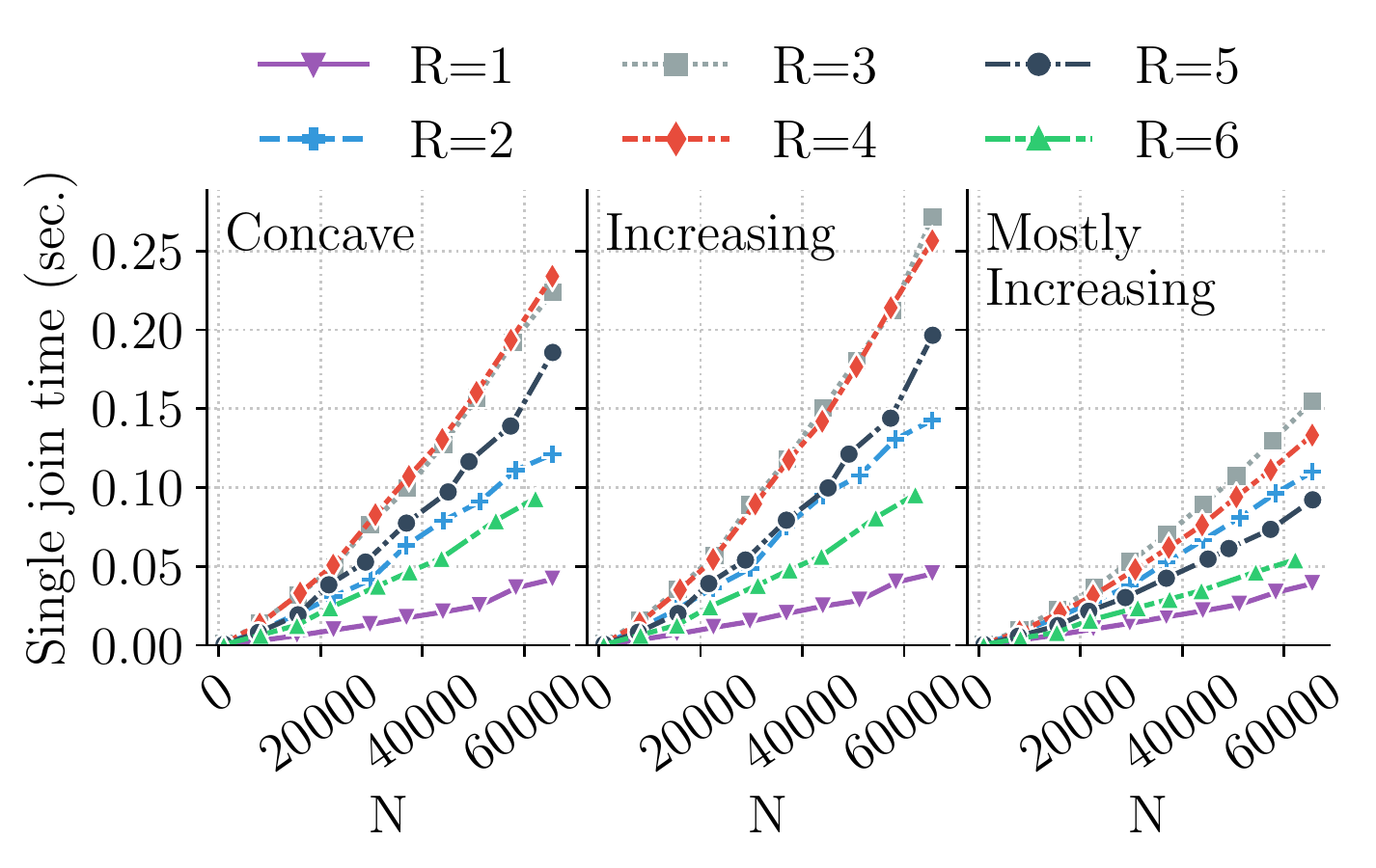}
    }
    \hfill
    \subfloat[Runtime of a full auction and payment calculation for 256 clients.\label{fig:killer-auction-time}]{    	\includegraphics[width=0.45\linewidth,valign=t]{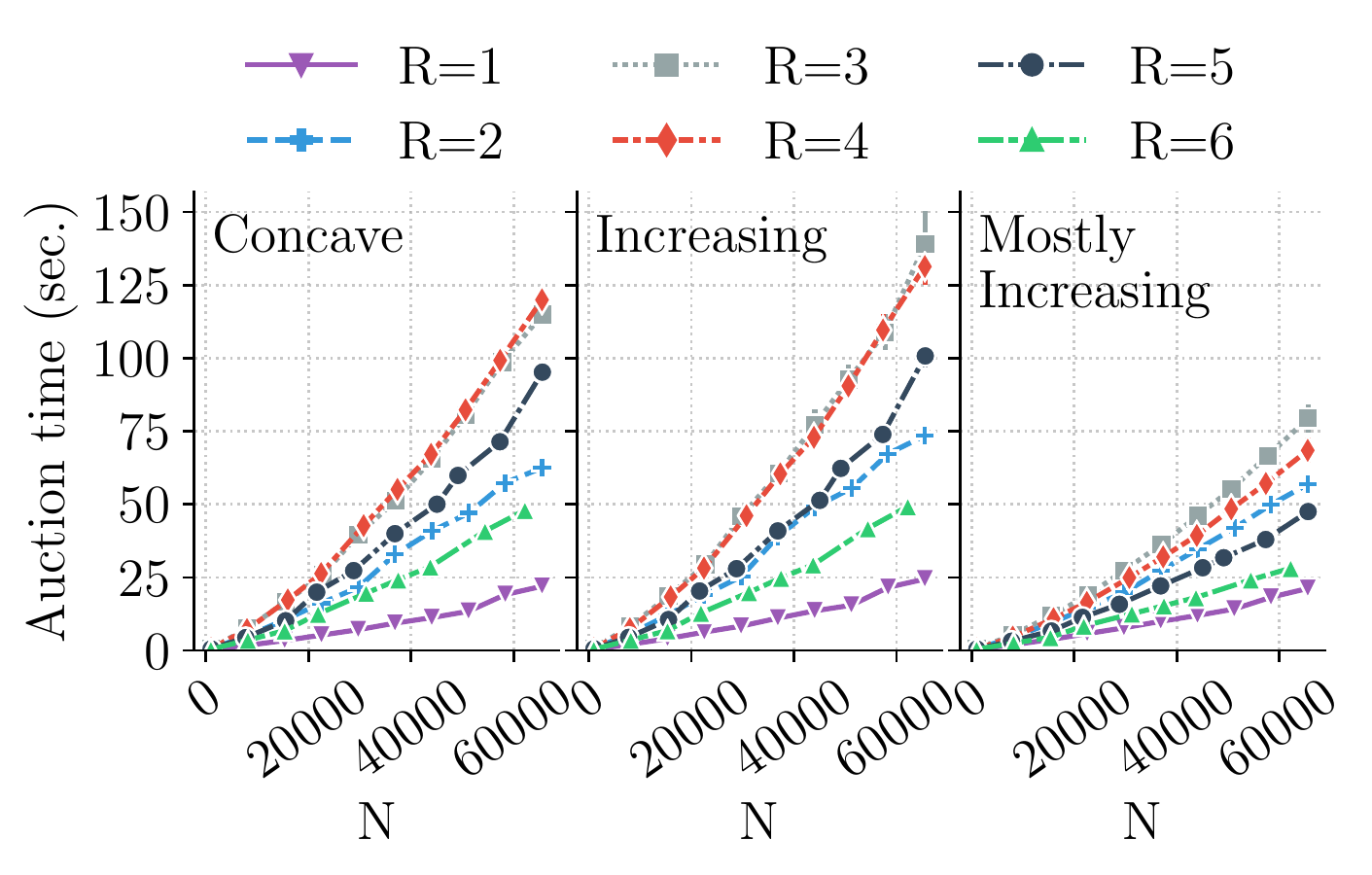}
    }
    \hfill
    \caption{The performance of our algorithm in each of our datasets (concave, increasing and mostly-increasing).}
    \label{fig:killer}
    \end{figure}

Adding resources results in larger vectors and thus higher complexity; at the same time, more vectors are eliminated in the preprocessing phase.
This is why we see an increase in runtime for up to four resources, after which the performance begins to improve.

Figure~\ref{fig:killer-auction-time} shows that the multi-resource auction is feasible even in the worst case: for concave/increasing valuation functions, and for three and four resources with 256 clients, a full auction takes less than two minutes for over 60,000 possible allocations.

\subsection{Na\"ive Joining of Valuation Functions}
The results show (Figure~\ref{fig:brute-force}) that the performance of the na\"ive approach for joining two valuation functions fits the expected curve, as shown in Section~\ref{sec:joining-vals}, for any number of resources.
Figure~\ref{fig:brute-force} depicts the performance for the increasing dataset.
The na\"ive joining is not affected by valuation function properties such as monotonicity.
The complexity function, described in Section~\ref{sec:joining-vals}, passes through all the markers, i.e., fits the actual performance perfectly.
Each line, however, had to be scaled by a different factor to fit the markers.
This might be an effect of the cache prefetching combined with the C-style multidimensional array representation.
The na\"ive joining compares each allocation $\vec a_i$ to all allocations $\vec a_j \;s.t.\; \vec a_j \leq \vec m - \vec a_i$.
For multidimensional valuation functions that are represented as C arrays, we will read the array non-continuously when $\vec a_i > 0$.
This will reduce the effectiveness of the cache prefetching as it relies on the continuity of the reading.

\begin{figure}[h]	    \centering
    \includegraphics[width=0.5\linewidth]{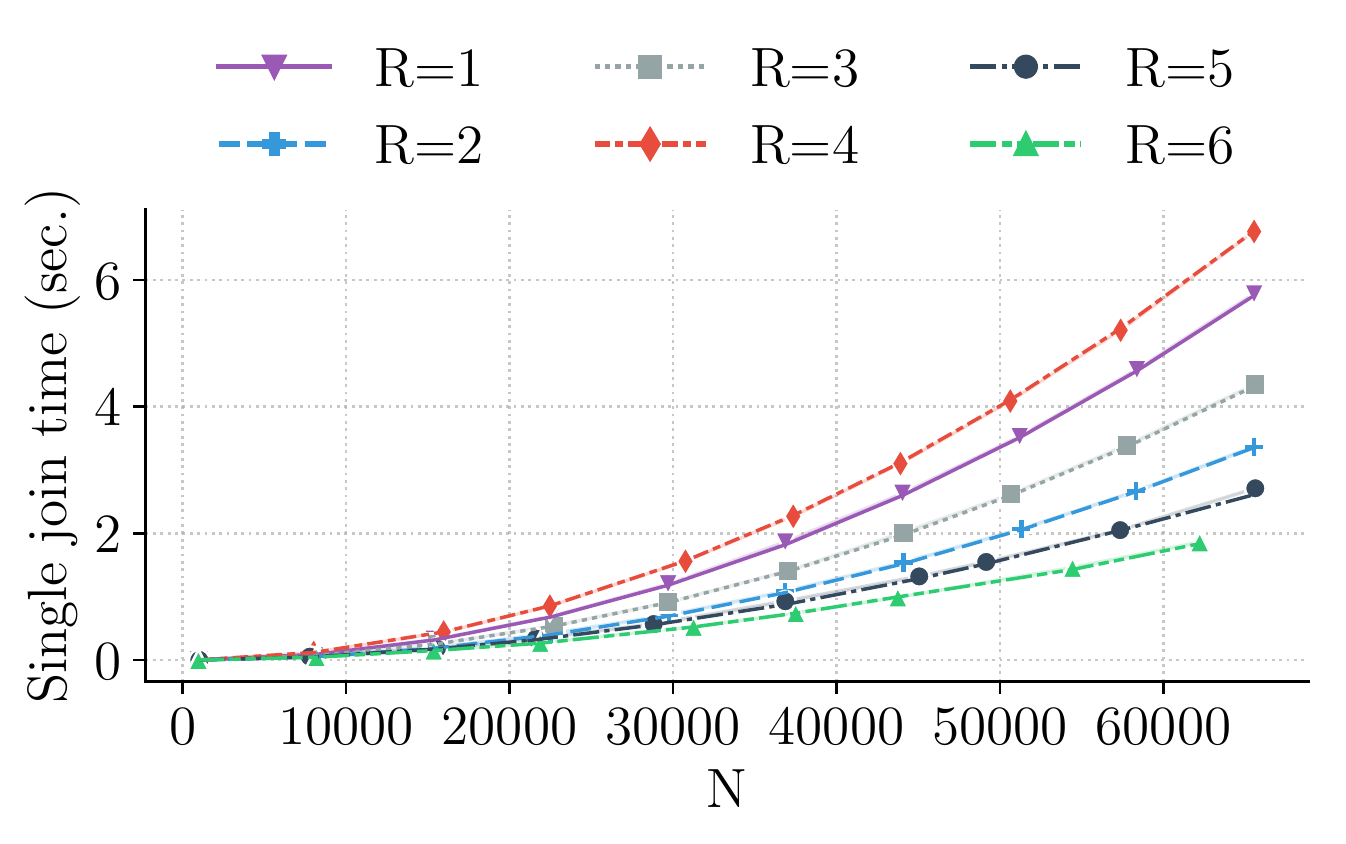}
    \caption{The performance of the na\"ive approach for joining two valuation functions.
    Markers depict the performance with different numbers of resources.
    The lines are the complexity function described in Section~\ref{sec:joining-vals}, scaled to fit the markers.}
    \label{fig:brute-force}
    \end{figure}

\subsection{Ideal Case Analysis}
\label{sec:potential-analysis}
We ran another set of experiments on each dataset, where we 
counted, in each joining of two valuation functions, the number of allocations that matched the queries of the one valuation function, for each allocation of the other.
Figure~\ref{fig:ds-compared-points} shows the results.
The number of matching allocations converges to a constant number.
Thus, were we to have an ideal data structure that does not return false positives and with reasonable query and construction time, the complexity of joining two valuation functions would be
$O(N)$.

\begin{figure}[h]    \centering
    \includegraphics[width=0.89\linewidth,valign=t]{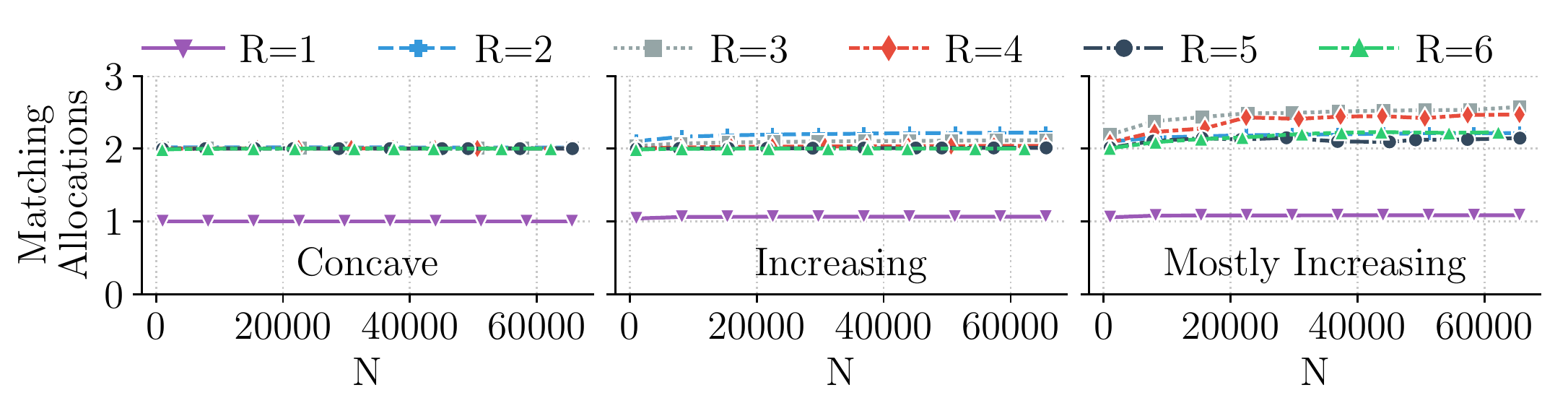}
    \caption{The average number of matching allocations for each query for each dataset and number of resources.
    }
    \label{fig:ds-compared-points}
\end{figure}

\subsection{Data Structure Analysis}
\label{sec:ds-analysis}
We timed each step of the algorithm: creating the data structure, performing the query and fetching the allocations.
The results are shown in Figure~~\ref{fig:ds-analysis}.

Simultaneous $1$-d binary searches have the fastest construction, but their false-positive ratio grows quickly with $N$, as indicated by the longer fetching time.
Hence they are not scalable.

Simultaneous $2$-d binary search trees performed similarly to the combination of trees.
The construction time of the latter is better as some trees contain fewer vectors.
We, therefore, recommend this solution.

In all the tested cases and for all the data structures, the construction time is more than 30\% of the total runtime,
and over 70\% of the total runtime in some cases.
Further improvement could be obtained by finding a data structure with a smaller construction time.
If we consider simultaneous $1$-d binary searches as a lower bound on the construction time---an upper-bound data structure must at least sort the vectors by each dimension---then improving the construction time will improve the algorithm by 10\%-20\% at most.

For similar reasons, the query time could not be lower than for $1$-d binary searches, which is nearly identical to the combination of data structures.
Hence, no further improvement in the query time is expected.

The fetching time is 10\%-25\% of the total time for the combination of data structures.
Reducing the number of false positives may reduce this phase's time.
Thus, any improvement of the algorithm by increasing the accuracy of the data structures is bounded by 10\%-25\%.

Fetching the allocations includes an additional filtering (one by one) to remove all the false positives.
Thus, the performance of the final step---i.e., comparing the allocations---is independent of the data structure.
From the above, we conclude that any speedup of the algorithm via an improved data structure is limited by 20\%-50\%.

\begin{figure}[t]    \centering
    \includegraphics[width=\linewidth]{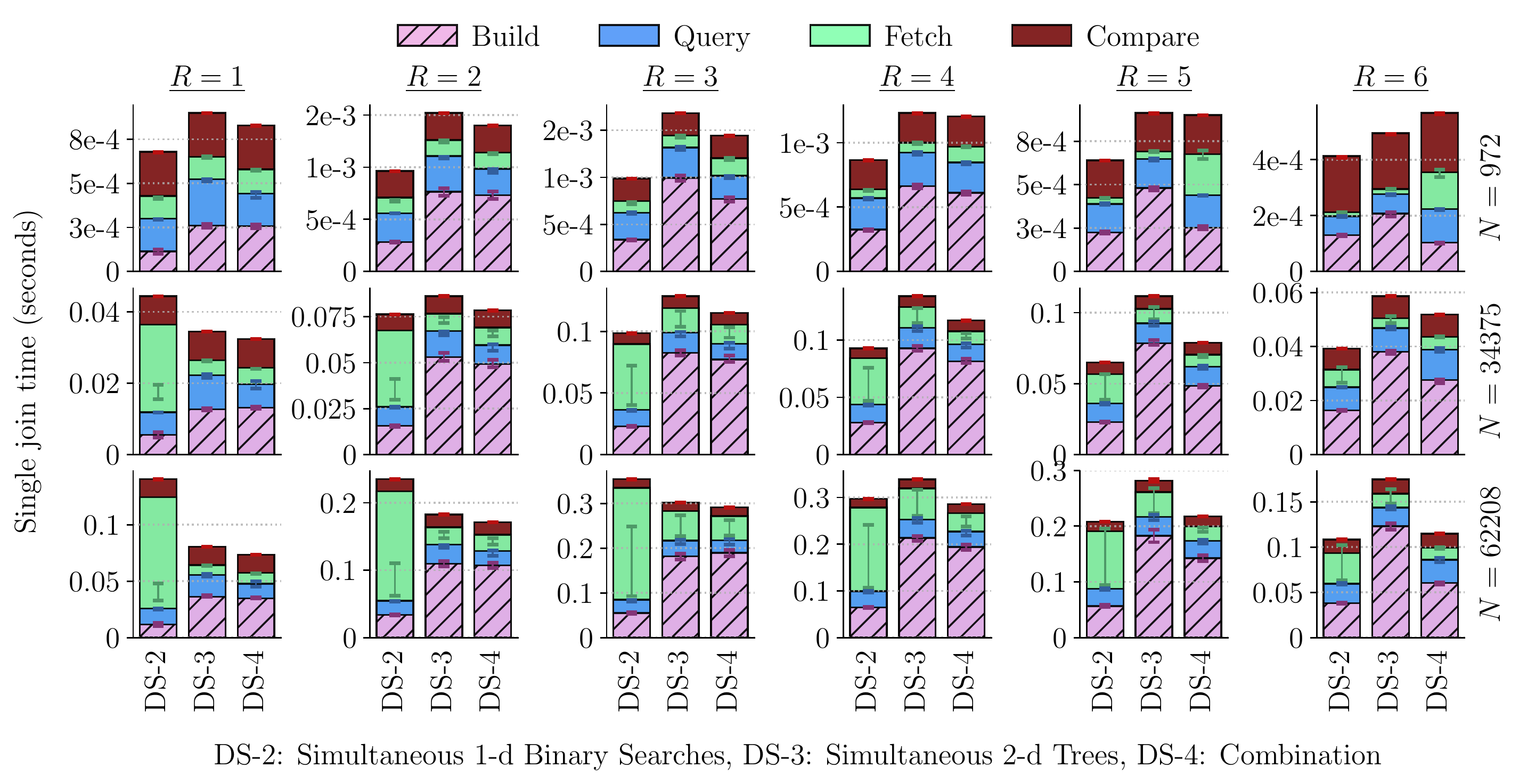}
    \caption{The runtime portion of each function of data structures (DS) when used by our algorithm in an auction.
    The whiskers represent the \nth{25} and \nth{75} percentiles.}
    \label{fig:ds-analysis}
    \end{figure}

\subsection{False Positives}
We measured the ratio of false positive results when applying our algorithm on all the datasets (Figure~\ref{fig:ds-false-positive}).
For any of the resources, the false-positive ratio grows linearly with the number of possible allocations.
Using an ideal data structure could reduce the number of false positives by up to a factor of 60.
Nonetheless, such an improvement could only speed up the optimization by 10\%-25\%, as shown in Section~\ref{sec:ds-analysis}.

\begin{figure}[h]	    \centering
    \includegraphics[width=0.5\linewidth]{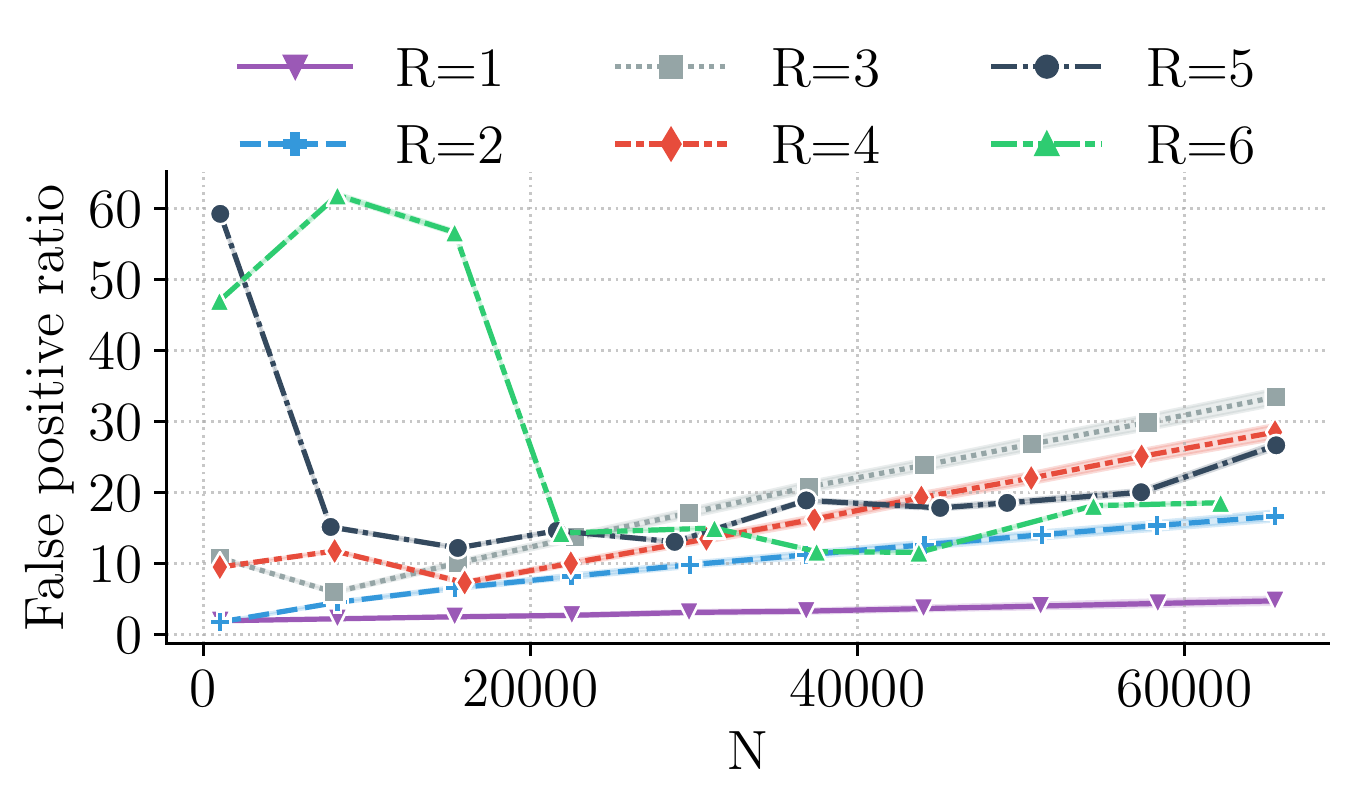}
    \caption{False-positive ratio for the combination of data structures for each number of resources.
    The translucent bands represent the confidence intervals.}
    \label{fig:ds-false-positive}
    \end{figure}

\subsection{Separate Single-Resource Auction}
We compared our multi-resource VCG auction implementation to the alternative of performing an auction for each resource separately.
We used Maill\'e and Tuffin's method for a single-resource auction with the concave valuation functions dataset.
For each resource $r$, each client bid its intermediate single-dimensional valuation functions $v_i^r$ (see Section~\ref{sec:datasets}).
Each client's maximal valuation was treated as a budget, which was partitioned equally among its valuation functions for each resource.
For example, for two resources, a client with a maximal valuation of 10 would have a maximal valuation of 5 for each of its resources.

Such an approach reduces the social welfare by over 60\% on average compared to the optimum for two resources (Figure~\ref{fig:multi-single-resource}).
When more resources are auctioned, the social welfare decreases even further.

\begin{figure}[h]    \centering
    \includegraphics[width=0.6\linewidth,valign=t]{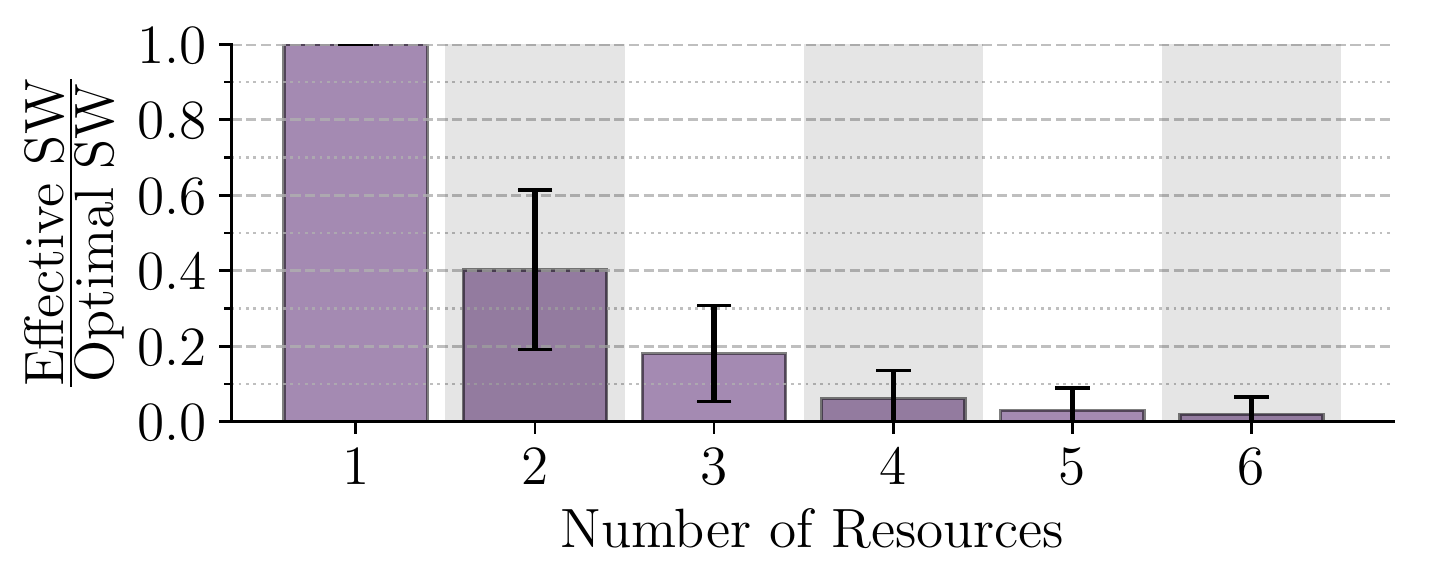}
    \caption{The social welfare when using a separate single-resource auction normalized to the optimal social welfare.
    The whiskers represent the standard deviation.}
    \label{fig:multi-single-resource}
\end{figure}

\subsection{Verification}
To verify our implementation, we compared our algorithm's results with those of Maill\'e and Tuffin~\cite{maille2004multi} 
using the concave dataset and a single resource.
For all the tested numbers of units ($N$), our algorithm produced the same allocation and payments as Maill\'e and Tuffin's method.

We also compared our algorithm's results for two and more resources to those of the na\"ive implementation.
For all the tested numbers of units ($N$) and resources ($R$), our algorithm produced identical results to the na\"ive implementation.

\section{Related Work}
The Resource-as-a-Service (RaaS) cloud~\cite{agmonben-yehuda2012resource} is a vertically elastic cloud model that allows providers to rent adjustable quantities of individual resources for short time intervals---even at a sub-second granularity.
It deploys economic mechanisms to allocate the resources quickly and efficiently.
The RaaS model was implemented in Ginseng: 
first, to allocate resources
for RAM~\cite{agmonben-yehuda2014ginseng} using a VCG-like auction mechanism, and later for last-level-cache~\cite{funaro2016ginseng} using a full VCG auction.

Many solutions were suggested for allocating multiple resources in the cloud.
Non-economic solutions may optimize fairness according to clients' requirements~\cite{ghodsi2011dominant,dolev2012no,gutman2012fair,skowron2013non,hines2011applications} or consider the clients as a black box and use host measurements instead~\cite{xiao2013dynamic}.
Hadi~et~al.~\cite{goudarzi2011multi} aim to maximize the profit of the providers by meeting client's SLA.
Some achieve truthfulness under restrictive conditions on the types of clients allowed to participate in the auction~\cite{lazar1999design,maille2004multi,agmonben-yehuda2014ginseng,bae2008efficient,mualem2008truthful}, or by restrictions on the bidding language~\cite{yang2007vcg,cai2012optimal,jia2009revenue,fonseca2017faircloud}.
Other solutions offer only near-optimal auction results~\cite{sanghavi2004optimal,nisan2007computationally,zhang2014dynamic,fukuta2011vcg,archer2004approximate}.

\section{Conclusions and Future Work}
We introduced a new efficient algorithm to allocate multiple divisible resources via a VCG auction, without any restrictions on the valuation functions.
We proved the algorithm's correctness, verified it 
experimentally, and showed its efficiency on a large number of resources and its scalability when increasing the number of units per resource.

We analyzed how the different properties of the valuation functions affect the algorithm's performance.
We showed that using only concave valuation functions negligibly decreases the complexity compared to increasing valuation functions, and that mostly-increasing ones perform the best.

We combined data structures, tailoring them to our input data to create a data structure that produces fewer false positives and has faster construction time.
We analyzed different data structures and showed a potential speedup of up to $2\times$.
Finding a better upper-bound data structure is left for future work.

Our algorithm allows cloud providers to implement the RaaS~\cite{agmonben-yehuda2012resource} model.
They can deploy a multi-resource auction for allocation of additional resources in an existing VM every two minutes for up to 256 clients in a single physical machine.
Our implementation can be adapted simply to use succinct valuation functions that are only defined on a small subset of the allocations.
This will eliminate the exponential factor of $N$ in $R$, the number of resources,
which may greatly improve the performance and might allow a sub-second auction granularity for a large number of clients.
A succinct implementation might also support continuous valuation functions with good performance but unbounded complexity.
Adapting the implementation for continuous succinct valuation functions is left for future work.

\section{Acknowledgment}
We thank
Deborah Miller, Sharon Kessler, Hadas Shachnai, Tamar Camus, Ido Nachum, Danielle Movsowitz and Shunit Agmon
for fruitful discussions.
This work was partially funded by the Hasso Platner Institute, and by the
Pazy Joint Research Foundation.

\bibliographystyle{splncs04}
\bibliography{funaro-bib/liranfunaro}

\end{document}